\newcommand{\myset}[2]{ \left\{ #1 \left| \, #2 \right. \right\} }
\newcommand{\diam}{\mathrm{diam}}
\newcommand{\ignore}[1]{}
\newcommand{\N}{\mathbb{N}}
\newcommand{\Z}{\mathbb{Z}}
\newcommand{\R}{\mathbb{R}}
\newcommand{\constr}{{\mathrm{constr}}}
\newcommand{\dimh}{\mathrm{dim}_\mathrm{H}}
\newcommand{\cdim}{\mathrm{cdim}}
\newcommand{\regSS}{S^\infty}
\newcommand{\C}{\mathbf{C}}
\newcommand{\K}{{\mathrm{K}}}
\newtheorem{theorem}{Theorem}[section]
\newtheorem{property}[theorem]{Property}
\newtheorem{corollary}[theorem]{Corollary}
\newtheorem{claim}[theorem]{Claim}
\newenvironment{definition}
{ {\noindent {\bf Definition.}} } {  }
\newtheorem{example}[theorem]{Example}
\newenvironment{example*}[1]
{ {\noindent {\bf Example #1.}} } {  }
\newenvironment{claim*}[1]
{ {\noindent {\bf Claim #1.}} } {  }
\newenvironment{theorem*}
{ {\noindent {\bf Theorem.}} } {  }
\newenvironment{proof}[1][xyzzy]
{
{\noindent {\bf Proof}%
\ifthenelse{\equal{#1}{xyzzy}}{{\bf .}}{~(#1).}} } { \hfill $\Box$ }
\newenvironment{proofof}[1]
{
{\noindent {\bf Proof #1.}%
}} { \hfill $\Box$ }
\newcommand{\cB}{\mathcal{B}}
\newcommand{\cBn}{\mathcal{B}_n}
\newcommand{\upto}{\upharpoonright}
\title{Effective Hausdorff dimension in general metric spaces}
\author{Elvira Mayordomo\footnote{Dept. de Inform\'atica e Ingenier\'{\i}a
de Sistemas, Instituto de Investigaci\'on en Ingenier\'{\i}a de Arag\'on (I3A), Universidad de Zaragoza, Zaragoza, SPAIN.
elvira@unizar.es.  Part of this work was done during a visit to Iowa State University, supported by NSF Grants 1247051 and
1545028.}}
\date{\today}
\begin{document}

\maketitle

\begin{abstract}
We introduce the concept of effective dimension for a wide class of metric spaces that are not required to have a
computable measure. Effective dimension was defined by Lutz in (Lutz 2003) for Cantor space and has also been extended to
Euclidean space. Lutz effectivization uses the concept of gale and supergale, our extension of Hausdorff dimension to other
metric spaces is also based on a supergale characterization of dimension, which in practice avoids an extra quantifier
present in the classical definition of dimension that is based on Hausdorff measure and therefore allows effectivization
for small time-bounds.

 We present here
the concept of constructive dimension and its characterization in terms of Kolmogorov complexity, for which we extend the
concept of Kolmogorov complexity to any metric space defining the Kolmogorov complexity of a point at a certain precision.
Further research directions are indicated.

\end{abstract}

\section{Introduction}

 Effective dimension in Cantor space
was defined by Lutz in \cite{DCC, DISS}\ in order to quantitatively
study complexity classes \cite{FGCC}. The connections of effective
dimension with Information Theory \cite{EFDAIT}, in particular with
Kolmogorov complexity and compression algorithms, some of them
suspected even before the definition of effective dimension itself
(\cite{Ryab84, Ryab86, Stai93, Stai98, CaiHar94} and more recently
for symbolic dynamical systems in \cite{Simp11}), have lead to very
fruitful areas of research including those within Algorithmic
Information theory \cite{DowHir10}.

In this paper we will explore the definition of effective dimension
for general metric spaces. The long term purpose of this line of
research is to find more and easier dimension bound proofs in those
spaces, while the connections with Information Theory already
suggest further developments.

The original definition of effective dimension was done in Cantor
space which is the set of infinite binary sequences with the usual
longest-common-prefix-based metric. The spaces of infinite sequences
over other finite alphabets have also been  explored, for instance
the case of Finite-State effectivity is particularly interesting
with this variation \cite{FSD}. 
Also the Euclidean space $\R^n$ has been considered in this context
by several papers that go back to fractal geometry, starting in
\cite{DPSSF}\ and including recent progress on the Kakeya conjecture
\cite{LutLut16}, and the space $h^{\omega}$, a generalization of
Cantor space, is studied in \cite{GreMil11}. Our approach
generalizes all the cited existing ones and does not require the
underlying space to have a computable measure.\footnote{Several
authors \cite{Gacs05,HoyRoj09,Miya14}\ have defined Martin-L\"of
algorithmical randomness in certain computable metric spaces, with
the obvious requirement of a computable Borel measure that we avoid
for effective dimension.}


Gales and supergales, introduced by Lutz in \cite{DCC}, are
intuitively betting strategies in a guessing game on the elements of
Cantor space. They allow the interpretation of Hausdorff dimension
in terms of prediction and provide natural effectivizations of
dimension by restricting the computability and resource-bounds used
in the computation of these betting strategies. In terms of the
complexity of the definition, the effectivization based on a gale
characterization of dimension avoids an extra quantifier on the
covers that would be present in effectivizations based on Hausdorff
measure and makes it possible to attain low time-bounds
effectivizations such as polynomial-time dimension.

We introduce here the concept of a nice cover of a metric space. A nice cover can simulate very closely any of the covers
required in the definition of Hausdorff dimension, while it allows simple representations of the points in the space and
the use of gales as betting games on those representations. Spaces with nice covers can
be fairly general and are locally separable. 


We then characterize Hausdorff dimension using supergales for any
metric space with a nice cover. This characterization allows the
definition of effective dimension by restricting the family of
supergales that can be used.


We  present here an initial step in this effectivization direction
by introducing the definition of constructive dimension on a metric
space. We then extend the concept of  Kolmogorov complexity to any
metric space and define the Kolmogorov complexity of a point at a
certain precision. We characterize constructive dimension in terms
of Kolmogorov complexity and prove the property of absolute
stability (that is, the fact that constructive dimension can be
pointwise defined). We present a few interesting examples of spaces
with a constructive dimension including all previously studied
cases.
 We finish
with a list of topics for further development.


\section{Preliminaries}

In this section we include the definition of Hausdorff diemnsion for
general metric spaces and include basic notation for strings and
sequences.

Let $(X, \rho)$ be a metric space. (From now on we will omit $\rho$
when referring to space $(X, \rho)$).

\begin{definition} The {\sl diameter\/} of a set $A\subseteq X$ is
\[\diam(A)=\sup \myset{\rho(x,y)}{x, y\in
A}.\] Notice that the diameter of a set can be infinite.
\end{definition}

\begin{definition} Let $A\subseteq X$. A {\sl cover of $A$\/} is $\mathcal{C}\subseteq
\mathcal{P}(X)$ such that $A\subseteq \cup_{U\in\mathcal{C}}
U$.\end{definition}

\begin{definition} Let $A\subseteq X$. $A$ is {\sl separable\/} if
there exists a countable set $S\subseteq A$ that is dense in $A$,
that is, for every $x\in A, \delta>0$ there is an $s\in S$ such that
$\rho(x,s)<\delta$.\end{definition}

\begin{definition} The {\sl ball\/} of radius $r>0$ about $x\in X$ is the set
$B(x,r)=\myset{y\in X}{ \rho(y,x) < r}$. \end{definition}

\begin{definition} A {\sl neighborhood\/} of $x\in X$ is a set $U$
that contains a ball about $x$.
\end{definition}

\begin{definition} A {\sl locally separable space\/} is $X$ such that every $x\in X$ has a
neighborhood that is separable.\end{definition}




\begin{definition} An {\sl isolated point in $X$\/} is $x\in X$ such
that there is a $\delta>0$ with $B(x,\delta)\cap X=\{x\}$.
\end{definition}

We will be interested in metric spaces that have no isolated points.
Notice that metric spaces consisting only of isolated points have
little interest for Hausdorff dimension (see definition below),
while Hausdorff dimension in general spaces can be analyzed by
restricting to non isolated points in the space.

We include the basic definitions of Hausdorff dimension. We refer
the reader to \cite{Falc03}\ for a complete introduction and
motivation.

 For each
$A\subseteq X$ and $\delta
>0$, we define the set of countable $\delta$-covers of $A$
\[
\mathcal{H}_{\delta}(A)= \myset{ \mathcal{U}}{ {\mathcal{U}\mbox{ is
a countable cover of } A \mbox{ and }\diam(U)<\delta \mbox{ for
every }U\in\mathcal{U}} }.
\]
We can now define ${H}_{\delta}^s(A)$ and ${H}^s(X)$
\begin{eqnarray*}
{H}_{\delta}^s(A)&=& \inf_{\mathcal{U}\in\mathcal{H}_{\delta}(A)}
\sum_{U\in\mathcal{U}}\diam(U)^s.\\
{H}^s(A)&=& \lim_{\delta\to 0}{H}_{\delta}^s(A).
\end{eqnarray*}
Notice that ${H}_{\delta}^s(X)$ is monotone as $\delta\to 0$ so
${H}^s(X)$ is well defined. It is routine to verify that  ${H}^s$ is
an outer measure \cite{Falc03}, ${H}^s$ is called the {\sl
$s$-Hausdorff measure}.

\begin{definition}
 (Hausdorff \cite{Haus19}). The
{\sl Hausdorff dimension\/} of $A\subseteq X$  is
\[\dimh(A)=\inf\myset{s\in[0,
\infty) }{ H^s(A)=0}.\]
\end{definition}

Notice that $\dimh(A)$ is nontrivial only when $A$ is bounded, that
is, $A$ has a finite diameter. In general there is an abuse of
notation for unbounded $A$ using $\dimh(A)=\sup_{B\subseteq A,
B\mbox{ bounded}}\dimh(B)$.

Let $\Sigma$ be a finite set. We denote as $\Sigma^*$ the set of
finite strings over $\Sigma$ and $\Sigma^{\infty}$ the set of
infinite strings over $\Sigma$. If $w\in\Sigma^*$ and
$x\in\Sigma^*\cup\Sigma^{\infty}$ then $w\sqsubseteq x$ denotes $w$
being a finite prefix of $x$. For $0\le i \le j$, we write $x[i
\ldots j]$ for the string consisting of the $i$-th through the
$j$-th symbols of $x$. We use $\lambda$ for the empty string. For
$n\ge 0$, we write $x\upto n$ for $x[0..n-1]$. ($x\upto 0$ denotes
again the empty string).

We use $< , >$ for a  {\sl pairing function\/} $< , >:\Sigma^*
\times \Sigma^*\to \Sigma^*$ that is injective, efficiently
computable and invertible. We also denote with $<w_1, \ldots, w_k>$
the $k$-fold composition of $< , >$ for each $k\in\N$.

\section{A supergale characterization of dimension in metric spaces}

\subsection{Nice covers}\label{secnc}

We introduce the concept of a nice cover for a metric space. A nice
cover allows well behaving representations of all points in the
space, and it will be the key to the supergale characterization of
Hausdorff dimension in the next subsection. 
Intuitively, a nice cover of $A$ is a sequence of covers of $A$ that
can closely simulate any Hausdorff cover of $A$.

We prove here that spaces with a nice cover are locally separable
and that  spaces with countable nice covers are separable. We also
include examples of spaces with nice cover that include all spaces
for which effective dimension has been defined so far.

Our goal is to generalize Lutz characterization  for Cantor space
\cite{DCC, DISS}. Notice that Cantor space has very good properties
not generally present in other spaces, namely a basis of { clopen}
intervals defined by finite prefixes (for $w\in\Sigma^*$,
$C_w=\myset{x}{w\sqsubseteq x}$) and a { Borel measure} from which
the distance is defined ($\rho(x,y)= \inf\myset{\mu(w)}{x,y\in
C_w}$). These two facts make prediction and compression arguments
natural and simpler, our aim is to have a prediction (and later on
compression) setting for more general metric spaces.

Let $X$ be a metric space without isolated points.

\begin{definition} 
A {\sl nice cover of $X$} is a sequence $(\mathcal{B}_n)_{n\in\N}$
with $\mathcal{B}_n\subseteq \mathcal{P}(X)$  for every $n$ and such
that the following hold
\begin{description}

\item[(A1)] (Decreasing monotonicity) For every $n\in\N$, $U\in \mathcal{B}_n$, $|\{V\in \mathcal{B}_{n+1}, V\subseteq
U\}|<\infty$. 

\item[(A2)]\label{unique} (Increasing monotonicity) For every $n\in\N$, $U\in
\mathcal{B}_n$, $m<n$, there is a unique  $V\in \mathcal{B}_{m}$
such that $U\subseteq V$.
\item[(A3)]\label{nonnull} (Layerwise nonnull) For every $n\in\N$,
$\inf_{U\in\mathcal{B}_n}\diam(U) >0$.
\item[(A4)]\label{ccover} ($c$-cover) There is a $c\in\N$ such that  for every $A\subseteq {X}$ with
$0<\diam(A)<1$  there exists $\{U_1, \ldots, U_c\}
\subseteq\cup_{n}\mathcal{B}_{n}$ a cover of $A$, with
$\diam(U_i)<c\cdot \diam(A)$ for every $i$.
\end{description}
\end{definition}

Notice that the above definition does not even require the elements
of each cover $\cBn$ to be open or disjoint.


\begin{theorem} If $X$ has a nice cover then $X$ is
locally  separable.
\end{theorem}

\begin{theorem} If $X$ has a countable nice cover then $X$ is separable. \end{theorem}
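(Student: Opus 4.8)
The plan is to build a countable dense subset of $X$ directly from the countable nice cover $(\mathcal{B}_n)_{n\in\N}$. First I would fix a $c$-nice cover and, for each $n\in\N$ and each $U\in\mathcal{B}_n$ with $U\neq\emptyset$, choose a point $x_U\in U$ (using countable choice, which is available since $\bigcup_n\mathcal{B}_n$ is countable). Let $S=\{x_U : n\in\N,\ U\in\mathcal{B}_n,\ U\neq\emptyset\}$. This $S$ is countable by construction, so the whole task reduces to showing $S$ is dense in $X$: given $x\in X$ and $\delta>0$, I must produce some $x_U\in S$ with $\rho(x,x_U)<\delta$.

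The key step is to exhibit, for a prescribed $x$ and $\delta$, a small-diameter set $A$ containing $x$ to which the $c$-cover property (item~\ref{ccover}) can be applied. Since $X$ has no isolated points, for every $\eta>0$ the ball $B(x,\eta)$ contains a point other than $x$, so $\diam(B(x,\eta))>0$; moreover $\diam(B(x,\eta))\le 2\eta\to 0$ as $\eta\to 0$. Thus I can pick $\eta$ small enough that $0<\diam(B(x,\eta))<\epsilon$, where $\epsilon$ is the threshold given by item~\ref{ccover} for some fixed $r$ (say $r=0$) — shrinking $\eta$ further if necessary so that also $2\eta<\delta$ and $c\cdot\diam(B(x,\eta))<\delta$, which is possible since $\diam(B(x,\eta))\to0$. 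Set $A=B(x,\eta)$. By the $c$-cover property there is a cover $\{U_1,\dots,U_c\}\subseteq\bigcup_{n>r}\mathcal{B}_n$ of $A$ with $\diam(U_i)<c\cdot\diam(A)<\delta$ for each $i$. Since $x\in A$, $x$ lies in some $U_i$; in particular $U_i\neq\emptyset$, so the chosen representative $x_{U_i}\in S$ exists, and $\rho(x,x_{U_i})\le\diam(U_i)<\delta$. Hence $S$ is dense, and $X$ is separable.

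I expect the main obstacle to be bookkeeping the order of quantifiers in the $c$-cover condition: $\epsilon$ depends on $r$ but not on $A$, so one must commit to $r$ first, read off $\epsilon$, and only then choose $\eta$ (and hence $A=B(x,\eta)$) small relative to both $\epsilon$ and $\delta$. The no-isolated-points hypothesis is essential precisely to guarantee $\diam(B(x,\eta))>0$, so that $A$ is a legitimate input to item~\ref{ccover} (which requires $0<\diam(A)$). The monotonicity conditions of the nice cover play no role here; only countability, the absence of isolated points, and the $c$-cover property are used.
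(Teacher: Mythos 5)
Your argument is correct: the paper states this theorem without proof, and your construction (pick one representative point from each nonempty element of the countable cover, then use the no-isolated-points hypothesis to feed a small ball of positive diameter into the $c$-cover property) is exactly the natural argument the statement calls for, with the quantifier order in item~\ref{ccover} and the role of the no-isolated-points assumption both handled correctly. Nothing to fix.
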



We next include several examples of spaces with a nice cover
including all spaces for which Hausdorff dimension has been
effectivized. Our first example is the generalization of Cantor
space to any finite alphabet and to any Borel probability measure.

\begin{example}

Consider the set $\Sigma^{\infty}$ of all infinite sequences over a
finite alphabet $\Sigma$ with the metric based on a positive and
nonatomic Borel probability measure.

Let $\nu: \Sigma^*\to [0,1]$ be a Borel probability measure  on
$\Sigma^{\infty}$ (that is, $\nu(\lambda)=1$ and for each
$w\in\Sigma^*$ $\nu(w)=\sum_{a\in\Sigma}\nu(wa)$) such that $\nu$ is
positive (that is, $\nu(w)>0$ for every $w\in\Sigma^*$) and
nonatomic (that is, $\lim_n\nu(x[0..n-1])=0$ for every
$x\in\Sigma^{\infty}$). Then $\nu(w)$ is the measure of cylinder
$\C_w=\myset{x\in\C}{w\sqsubseteq x}$), and the distance is defined
as $\rho(x,y)=\inf\myset{\nu(w)}{x,y \in \C_w}$.

This space has a nice cover formed by cylinders $\C_w$. We take
$\mathcal{B}_{n}=$ $\myset{\C_w}{w\in\{0,1\}^n}$ for each $n\in\N$.
(In this case condition (A4) is met for $c=1$ by considering the
longest $w$ such that $A\subseteq \C_w$, we can prove that
$\diam(A)=\diam(\C_w)=\nu(w)$).

Notice that we place no restrictions on the probability measure
$\nu$, so this is a strict generalization of the case considered in
\cite{DPSSF}, where $\nu$ needed to be ``strongly positive'' as
defined in \cite{DPSSF}. This is because of the good properties of
Cantor space.

 Particular cases of this metric space are widely
used in connection with coding and compression algorithms, and also
to represent symbolic dynamical systems.

\end{example}

We will now move to Euclidean space.

\begin{example}

For $m\in\N$, consider the set $\R^m $ with the metric based on a strongly positive Borel measure $\nu\in\Delta(\R^m)$.

We use the notation $I(\mathbf{z}, \mathbf{a}, n)$ ($\mathbf{z}\in\Z^m$, $\mathbf{a}\in\N^m$, $n\in\N$) for the $n$-diadic
interval defined by $\mathbf{z}, \mathbf{a}$, that is, $I(\mathbf{z}, \mathbf{a},
n)=[z_1+a_12^{-n},z_1+(a_1+1)2^{-n}]\times \ldots \times [z_m+a_m2^{-n}, z_m+(a_m+1)2^{-n}]$ (for every $i$, $a_i< 2^n$ to
avoid multiple representations).

$\nu$ is strongly positive if there is a $\delta>0$ such that $\nu(I(\mathbf{t}, \mathbf{b}, n+1))>\delta\cdot
\nu(I(\mathbf{z}, \mathbf{a}, n))$ for every $I(\mathbf{t}, \mathbf{b}, n+1)\subseteq I(\mathbf{z}, \mathbf{a}, n)$.

This space has a nice cover formed by intervals with dyadic rational extremes. We take
$\mathcal{B}_{n}=\myset{I(\mathbf{z}, \mathbf{a}, n)}{\mathbf{z}\in\Z^m, \mathbf{a}\in\N^m}$ for each $n\in\N$. (In this
case condition (A4) is met for $c= \max(1/\delta, 2^m)$). Notice that an interval $[a,b]$ can be covered by two dyadic
intervals $A, B$ of size $\nu(A)\le \nu([a,b])/\delta$, $\nu(B)\le \nu([a,b])/\delta$.)

\end{example}

\begin{example}

Let $h:\N\to \N-\{0,1\}$, we define $h^{\omega}=\Pi_{n\in\N}\{0, 1,
\ldots, h(n)-1\}$  with the metric based on a positive and nonatomic
Borel probability measure $\nu$.

For $n\in\N$, let $h^{n}=\Pi_{m<n}\{0, 1, \ldots, h(m)-1\}$. Let
$h^*=\cup_{n\in\N}h^{n}$.

Given a Borel probability measure $\nu: h^*\to [0,1]$ (where
$\nu(w)$ is the measure of cylinder $\C_w=\myset{x\in
h^{\omega}}{w\sqsubseteq x}$), the distance is defined as
$\rho(x,y)=\inf\myset{\nu(w)}{x,y \in \C_w}$.

This space has a nice cover formed by cylinders $\C_w$. We take
$\mathcal{B}_{n}=\myset{\C_w}{w\in h^n}$ for each $n\in\N$. (In this
case condition (A4) is met for $c=1$ by considering the longest $w$
such that $A\subseteq \C_w$, we can prove that
$\diam(A)=\diam(\C_w)=\nu(w)$).

This is a strict generalization of the case considered in \cite{GreMil11}.

\end{example}

Our last example has not been considered before in the context of effective dimension, and can have further interest in the
context of spaces of functions.

\begin{example}\label{polyn}
Let $n\in\N$. Let $P_n(0,1)$ be the set of polynomials  with real coefficients and degree less than or equal to $n$,
together with the metric $d(f, g) = \|f-g\|_{\infty}=\sup_{x \in (0,1)} |f(x) - g(x)|$, for $f, g \in P_n(0,1)$.

This space has a nice cover defined from dyadic coefficient polynomials as follows. Let $\mathbf{z}\in\Z^{n+1}$,
$\mathbf{a}\in\N^{n+1}$, $k\in\N$, with $a_i< 2^k$ for every $i$,   \[DP(\mathbf{z}, \mathbf{a}, k)= \{ b_0+b_1x+\ldots +
b_nx^n\, |\, b_i\in\R, 0\le b_i-(z_i+a_i2^{-k})\le  2^{-k})\}.\] We take $\mathcal{B}_{k}=\myset{DP(\mathbf{z}, \mathbf{a},
k)}{\mathbf{z}\in\Z^{n+1}, \mathbf{a}\in\N^{n+1}}$ for each $k\in\N$. (In this case condition (A4) is met for $c=\max\{
2^n, 2(n+1)\}$).

\end{example}


\subsection{Supergale characterization of Hausdorff dimension}

In this subsection we prove a supergale characterization of
Hausdorff dimension for $X$ with a nice cover. 
Notice that each nice cover gives an equivalent characterization of
dimension. The definition of a supergale in terms of diameters gives
a new and interesting prediction setting for general metric spaces.

The concept of gale we introduce here is the natural extension of
the gales introduced in \cite{DCC}\ to spaces with nice covers,
while the flexibility on the metric spaces makes the proof of this
characterization quite more involved than the case of Cantor spaces
proven in \cite{DCC}. For instance we cannot assume anything about
the diameters of the covers used.

Let $X$ be a metric space with a nice cover, fix a nice cover
$(\mathcal{B}_n)_{n\in\N}$. Let $\cB=\cup_n \cBn$. For $n\in\N$, let
$\cB_{\ge n}=\cup_{m\ge n} \cB_m$.


\begin{definition} Given  $x\in X$, {\sl a $\cB$-representation of
$x$\/}
is a sequence $(w_n)_{n\in\N}$ such
that $w_n\in \mathcal{B}_n$ 
and $x\in \cap_n
w_n$.
\end{definition}

We denote with $\mathcal{R}(x)$ the set of $\cB$-representations of
$x\in X$.

A supergale is intuitively a strategy in a betting game on a
representation $(w_n)_{n\in\N}$ of an unknown $x\in X$.

\begin{definition} Let $s\in [0, \infty)$. An {\sl $s$-supergale $d$} is a
function $d: \mathcal{B}\to [0, \infty)$ such that the following
hold

\begin{itemize}
\item $\sum_{U\in \mathcal{B}_{0}}
d(U)\,\diam(U)^s < \infty$,
\item for every $n\in\N$, for every $U\in \mathcal{B}_n$ the following
inequality holds
\begin{equation}\label{eqg}
d(U)\,\diam(U)^s\ge \sum_{V\in \mathcal{B}_{n+1}, V\subseteq U}
d(V)\,\diam(V)^s. \end{equation}
\end{itemize}
An {\sl $s$-gale\/} is an $s$-supergale for which equation
(\ref{eqg}) holds with equality.
\end{definition}

\begin{definition} An $s$-supergale $d$ {\sl succeeds on $x\in X$}
if there is a $(w_n)_{n\in\N}\in \mathcal{R}(x)$,  such that
\[\limsup_n d(w_n)=\infty.\]
\end{definition}

\begin{definition} Let $d$ be an $s$-supergale. The success set of $d$ is
 \[\regSS[d]=\myset{x\in X}{d \mbox{ suceeds on }x}.\]\end{definition}


\begin{definition}
$\hat{\mathcal{G}}(A)=\myset{s}{\mbox{there is an }s\mbox{-supergale
}d \mbox{ with }A\subseteq\regSS[d]}$.
\end{definition}

We start with two useful properties of supergales. The first one
allows success to be defined independently of layers $\cB_n$ for
$n\in\N$, the second one is a generalization of Kraft inequality
that is used in Cantor space, that gives us bounds on  supergale
values.

\begin{property}\label{players} Let $d$ be a an $s$-supergale, if  there is a sequence in $\cB$\ $(y_m)_{m\in\N}$  such that
$\limsup_md(y_m)=\infty$ then $\cap_m y_m\subseteq \regSS[d]$.

\end{property}

\begin{proof} From $\limsup_md(y_m)=\infty$ and condition (\ref{eqg}) in the
definition of supergale we have that $\liminf_m \diam(y_m)=0$. In
particular there is a subsequence $(y_{m_k})_{k\in\N}$ with
$\lim_kd(y_{m_k})=\infty$ and $\lim_k \diam(y_{m_k})=0$. Therefore
using (A3) we have that for every $n$ there is a $k$ such that
$y_{m_k}\in \cB_{\ge n}$ which gives us $(w_n)_{n\in\N}$ such that
$w_n\in \mathcal{B}_n$ with $\cap_m y_m\subseteq \cap_n w_n$ and
$\limsup_nd(w_n)=\infty$.

\end{proof}

\begin{property}\label{cl4}(Generalization of Kraft inequality) Let $d$ be an $s$-supergale. Then for every $\mathcal{E}\subseteq\cB$
such that all sets in $\mathcal{E}$ are incomparable  we have that
\[\sum_{W\in\cB_0}d(W)\,\diam(W)^s\ge \sum_{V\in \mathcal{E}}
d(V)\,\diam(V)^s.\]\end{property}

\begin{proof}
Notice that $ \sum_{U\in \mathcal{B}_{0}} d(U)\,\diam(U)^s < \infty$
implies that $d(U)\,\diam(U)$ is nonzero on a countable subset of
$\cB_0$, and therefore $d(U)\,\diam(U)$ is nonzero on a countable
subset of $\cB$ because of (A1). Therefore let us assume that
$\mathcal{E}\subseteq\cB$ is countable.

We prove that for any finite subset $\mathcal{A}\subseteq \mathcal{E}$, \[\sum_{W\in\cB_0}d(W)\,\diam(W)^s\ge \sum_{V\in
\mathcal{A}} d(V)\,\diam(V)^s.\] This can be proven by induction on $n$ such that $\mathcal{A}\subseteq\cB_{\ge n}$ using
condition (\ref{eqg}) in the definition of supergale.
\end{proof}

Our next Theorem is the supergale characterization of dimension.
Since the metric used is not necessarily related to a Borel measure,
and  we can't assume a suitable basis of clopen (not even open) sets
the arguments in the proof are different from the Cantor space case
\cite{DCC}. The fact that we need to use diameters instead of a
probability measure makes it necessary to carefully examine each sum
of diameters for its good definition and finiteness.

\begin{theorem}\label{thsg}(Supergale characterization) Let $X$ be a metric space that has a nice cover, let $A\subseteq X$. Then
\[\dimh(A)= 
\inf\hat{\mathcal{G}}(A).\]\end{theorem}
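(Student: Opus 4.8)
The plan is to prove the two inequalities $\dimh(A)\le \inf\hat{\mathcal{G}}(A)$ and $\inf\hat{\mathcal{G}}(A)\le\dimh(A)$ separately, in each case translating between a Hausdorff $\delta$-cover of $A$ and an $s$-supergale succeeding on $A$.

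For the direction $\dimh(A)\le\inf\hat{\mathcal{G}}(A)$, I would take $s\in\hat{\mathcal{G}}(A)$ witnessed by an $s$-supergale $d$ with $A\subseteq\regSS[d]$, and show $H^{s'}(A)=0$ for every $s'>s$ (which gives $\dimh(A)\le s$, hence the inequality after taking infima). Fix $s'>s$ and $\delta>0$. The standard gale argument: for a threshold $2^k$, consider the set $\mathcal{C}_k$ of minimal (in the $\subseteq$-tree induced by increasing monotonicity) elements $U\in\cB_{\ge r}$ with $d(U)\ge 2^k$, where $r$ is chosen so large (using the $c$-cover property or, more simply, that diameters along any representation tend to $0$ — this needs an argument) that all such $U$ have $\diam(U)<\delta$. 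Every $x\in\regSS[d]$ with success witnessed by $(w_n)$ has some $w_n$ in $\mathcal{C}_k$, so $\mathcal{C}_k$ covers $A$. The supergale inequality iterated down the tree from $\cB_0$ gives $\sum_{U\in\mathcal{C}_k} d(U)\diam(U)^s\le\sum_{U\in\cB_0}d(U)\diam(U)^s=:M<\infty$, whence $\sum_{U\in\mathcal{C}_k}\diam(U)^{s}\le 2^{-k}M$, and since the relevant diameters are $<\delta<1$ we get $\sum_{U\in\mathcal{C}_k}\diam(U)^{s'}\le\sum_{U\in\mathcal{C}_k}\diam(U)^{s}\le 2^{-k}M$. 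Letting $k\to\infty$ shows $H^{s'}_\delta(A)=0$, then $\delta\to 0$ gives $H^{s'}(A)=0$.

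For the direction $\inf\hat{\mathcal{G}}(A)\le\dimh(A)$, I would take $s>s'>\dimh(A)$, so $H^{s'}(A)=0$, and build an $s$-supergale $d$ with $A\subseteq\regSS[d]$, giving $s\in\hat{\mathcal{G}}(A)$; letting $s\downarrow\dimh(A)$ finishes. Since $H^{s'}(A)=0$, for each $k$ there is a countable cover $\mathcal{U}_k$ of $A$ with $\sum_{U\in\mathcal{U}_k}\diam(U)^{s'}\le 2^{-k(s-s')}$ (and arbitrarily small diameters). Using the $c$-cover property, replace each $U\in\mathcal{U}_k$ of small enough diameter by $c$ elements of $\cB$ (from levels above any prescribed $r$) covering $U$ with diameters $<c\diam(U)$; this yields a cover $\mathcal{V}_k\subseteq\cB$ of $A$ with $\sum_{V\in\mathcal{V}_k}\diam(V)^{s'}\le c\cdot c^{s'}\cdot 2^{-k(s-s')}$, and we may arrange (by iterating the construction at larger and larger base levels $r_k\to\infty$) that the levels used for $\mathcal{V}_{k+1}$ exceed those used for $\mathcal{V}_k$. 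Now define $d$ on $\cB$ by setting, for $U\in\cB_n$,
\[
d(U)\diam(U)^s=\sum_{k}\ \sum_{\substack{V\in\mathcal{V}_k\\ V\subseteq U}}\diam(V)^{s'}\diam(V)^{s-s'}\cdot[\text{$V$ at level}\ge n],
\]
more carefully: put $\nu_k(U)=\sum_{V\in\mathcal{V}_k,\,V\subseteq U}\diam(V)^{s}$ for $U$ at a level below all of $\mathcal{V}_k$, extend $\nu_k$ to all of $\cB$ by making it a martingale-type quantity constant-then-splitting, and set $d(U)\diam(U)^s=\sum_k \mu_k(U)$ where $\mu_k$ is the natural "$s$-gale mass" of the sub-tree hanging below $U$ that belongs to $\mathcal{V}_k$; the total mass at level $0$ is $\sum_k\sum_{V\in\mathcal{V}_k}\diam(V)^s\le\sum_k\diam\text{-bound}\cdot 2^{-k(s-s')}<\infty$ because $\diam(V)^{s}=\diam(V)^{s'}\diam(V)^{s-s'}$ and the diameters are small. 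The supergale inequality (\ref{eqg}) holds because each $\mu_k$ is itself a (sub)gale mass. Finally, for $x\in A$ and each $k$ there is $V_k\in\mathcal{V}_k$ with $x\in V_k$; choosing a representation $(w_n)$ of $x$ that passes through each $V_k$ (possible since the $\mathcal{V}_k$ live at strictly increasing levels and increasing monotonicity forces consistency), we get $d(w_{n_k})\diam(w_{n_k})^s\ge\mu_k(w_{n_k})\ge\diam(V_k)^s$ where $n_k$ is the level of $V_k$; but also, tracking the same quantity for all $k'\ge k$, $d(w_{n_k})$ can be forced to grow — here one rescales $\mathcal{V}_k$ by a factor $2^{k}$ (affordable since $\sum_k 2^{k}\cdot c^{1+s'}2^{-k(s-s')}<\infty$ as $s>s'$ can be taken with $s-s'>1$, or one uses $2^{\epsilon k}$ and shrinks) so that $d(w_{n_k})\ge 2^{k}$, giving $\limsup_n d(w_n)=\infty$ and $x\in\regSS[d]$.

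The main obstacle is the second direction, specifically the bookkeeping that turns a sequence of Hausdorff covers $\mathcal{V}_k$ — whose elements sit at \emph{unpredictable and wildly varying} levels of the nice cover, with no control on diameters per level (as the authors themselves flag) — into a single well-defined $s$-supergale on the tree $\cB$. One has to (i) guarantee the levels used by successive $\mathcal{V}_k$ are disjoint and increasing so the "sub-tree masses" $\mu_k$ don't interfere and a single representation of $x$ can thread all of them, (ii) verify the supergale inequality survives the sum over $k$ and the diameter discrepancies between a node $U$ and the $V\in\mathcal{V}_k$ below it (this is where $\diam(U)^s$ in the denominator and the freedom $s>s'$ are used to absorb constants like $c^{1+s'}$), and (iii) choose the amplification factors so the total initial mass stays finite while each $x\in A$ still gets unbounded value along \emph{some} representation. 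The first direction is essentially the classical Cantor-space argument and should go through with only the mild extra care that diameters along any $\cB$-representation tend to $0$ (which follows from the $c$-cover property applied to singletons' neighborhoods, or more directly from separability plus decreasing monotonicity) so that the covering sets $\mathcal{C}_k$ can be made to have diameter $<\delta$.
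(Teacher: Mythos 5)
Your overall strategy is the same as the paper's: one direction via the antichain of minimal sets where the supergale exceeds a threshold $2^k$ plus the mass bound $\sum_U d(U)\diam(U)^s\le M$, the other via refining Hausdorff covers into $\cB$ using the $c$-cover property, building one sub-gale per cover, and summing with amplification factors. So the architecture is sound. But one step, as you justify it, fails for a general nice cover: you claim that by "choosing $r$ large" the minimal elements $U\in\cB_{\ge r}$ with $d(U)\ge 2^k$ all have $\diam(U)<\delta$, appealing either to the $c$-cover property or to diameters tending to $0$ along representations. Neither holds: the definition of nice cover places \emph{no} upper bound on the diameters of elements of $\cB_{\ge r}$ (the small-size condition $\diam(U)<\zeta^n$ appears only later, for \emph{computable} nice covers), and the $c$-cover property only says small sets admit small covers from deep levels, not that deep levels contain only small sets --- this is exactly the point the paper flags with "we cannot assume anything about the diameters of the covers used." The correct fix is already sitting in your own display: from $\sum_{U\in\mathcal{C}_k}d(U)\diam(U)^s\le M$ and $d(U)>2^k$ you get $\sum_{U\in\mathcal{C}_k}\diam(U)^s\le 2^{-k}M$, hence every individual $U\in\mathcal{C}_k$ has $\diam(U)\le(2^{-k}M)^{1/s}$, so $\mathcal{C}_k$ is automatically a $\delta$-cover for $k$ large; no choice of $r$ is needed (and you can then work with $H^s$ directly rather than passing to $s'>s$). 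This is precisely how the paper's proof gets the $2^{-k/s}$-covers $\mathcal{D}_k$.

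Two smaller points. First, your amplification bookkeeping "$\sum_k 2^k c^{1+s'}2^{-k(s-s')}<\infty$ as $s-s'>1$" would only establish $s\in\hat{\mathcal{G}}(A)$ for $s>\dimh(A)+1$; you note the repair yourself, and the clean version (as in the paper) is to take the $k$-th cover with $s$-weight below $2^{-2k}$ and amplify by $2^k$, since $H^{s'}(A)=0$ lets you make the cover weights as small as you like. Second, the claim that a single representation of $x$ can be chosen "passing through each $V_k$" needs care, because the levels $\cBn$ need not be disjoint, so the unique level-$n_k$ ancestor of $V_{k+1}$ guaranteed by increasing monotonicity need not be $V_k$; one needs a compactness/pigeonhole argument using the finite-branching condition to extract one representation along which $\limsup_n d(w_n)=\infty$. (The paper's own write-up is also terse here, but you should not present "increasing monotonicity forces consistency" as if it were immediate.) With these repairs your argument coincides with the paper's in all essentials.
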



\begin{proof}

Let $s>\dimh(A)$. Then for any $k\in\N$ there is a countable cover
of $A$, $\mathcal{C}_k$, such that $\sum_{U\in
\mathcal{C}_k}\diam(U)^s < 2^{-k}$ and $\diam(U)>0$ for each $U\in
\mathcal{C}_k$. (If necessary substitute each $U_n\in \mathcal{C}_k$
with $\diam(U_n)=0$ by a ball of radius $2^{-k/s-n/s-1})$.


Using property (A4) of nice covers we can get a cover
$\mathcal{E}_k\subseteq \mathcal{B}$ of $A$ such that
\[\sum_{W\in \mathcal{E}_k}\diam(W)^s < c^{1+s}\cdot 2^{-k}.
\]

 Let
$\mathcal{D}_k=\myset{U}{U\in\mathcal{E}_k\mbox{ and no proper
superset of }U\mbox{ is in }\mathcal{E}_k}$. Then $\mathcal{D}_k$ is
a countable cover of $A$ and \[\sum_{W\in \mathcal{D}_k}\diam(W)^s <
c^{1+s}\cdot 2^{-k}.
\]


 Define
$d_k: \mathcal{B}\to [0, \infty)$ as follows,

For $U\in\cB$, if $\diam(U)=0$ then $d(U)=1$.

If $U\in\cBn$ for $n>0$, $\diam(U)>0$, and there is $V\in\cB_{n-1}$
and $W\in \mathcal{D}_k$ with $U\subseteq V \subseteq W$, $U\ne V$
and $\sum_{\scriptstyle{U'\subseteq V, U'\in\cB_{n}}}\diam(U')>0$
then

\[d_k(U)=\frac{d_k(V)\,\diam(V)^s}{\sum_{\scriptstyle{U'\subseteq V,
U'\in\cBn}}\diam(U')^s}.\]

Otherwise, if  $U\in\cBn$, $\diam(U)>0$,

\[d_k(U)=\sum_{\scriptstyle W\in\mathcal{D}_k\cap \cB_{\ge n}, W\subseteq U}\frac{\diam(W)^s}{\diam(U)^s}.\]

\begin{claim}\label{cl1}$d_k$ is an $s$-supergale.
\end{claim}

\begin{proofof}{of Claim \ref{cl1}} Let $V\in\cB_{n-1}$
and \newline
 $\sum_{\scriptstyle{U'\subseteq V, U'\in\cB_{n}}}\diam(U')>0$.

If there is $W\in\mathcal{D}_k$ such that $V\subseteq W$ then

\begin{eqnarray*}\sum_{\scriptstyle{U\subseteq V,
U\in\cB_{n}}} d_k(U)\, \diam(U)^s &=& \sum_{\scriptstyle{U\subseteq
V, U\in\cB_{n}}} \frac{d_k(V)\,
\diam(V)^s}{\sum_{\scriptstyle{U'\subseteq V,
U'\in\cB_{n}}}\diam(U')^s}\diam(U)^s \\&=& d_k(V)\,
\diam(V)^s.\end{eqnarray*}

If for any $W\in\mathcal{D}_k$, $V\not\subseteq W$ then

\[d_k(V)= \sum_{\scriptstyle W\in \mathcal{D}_k\cap \cB_{\ge n-1}, W\subseteq V}
\frac{\diam(W)^s}{\diam(V)^s}.\]

Therefore,
\begin{eqnarray*}\sum_{\scriptstyle{U\subseteq V,
U\in\cB_{n}}} d_k(U)\, \diam(U)^s &=& \sum_{\scriptstyle{U\subseteq
V, U\in\cB_{n}}} \sum_{\scriptstyle W\in \mathcal{D}_k\cap \cB_{\ge
n}, W\subseteq U} \frac{\diam(W)^s}{\diam(U)^s} \diam(U)^s \\ &=&
\sum_{\scriptstyle{U\subseteq V, U\in\cB_{n}}} \sum_{\scriptstyle
W\in \mathcal{D}_k\cap \cB_{\ge n}, W\subseteq U} {\diam(W)^s}
\\ &\le&
 \sum_{\scriptstyle
W\in \mathcal{D}_k\cap \cB_{\ge n-1}, W\subseteq V} {\diam(W)^s}  =
d_k(V)\, \diam(V)^s,\end{eqnarray*} where the last inequality
follows from property (A2) of nice covers.

For every $U\in \cB_0$, we use the second part in the definition of
$d_k$. Therefore, using property (A2) of nice covers,
 \[\sum_{U\in \mathcal{B}_{0}}
d_k(U)\,\diam(U)^s \le  \sum_{\scriptstyle
W\in\mathcal{D}_k}{\diam(W)^s}< c^{1+s}\cdot 2^{-k}<\infty.\]

\end{proofof}

\begin{claim}\label{cl2}If $W\in \mathcal{D}_k$, $d_k(W)= 1$.
\end{claim}
\begin{proofof}{of Claim \ref{cl2}} If $W\in\cBn$, since all sets in
$\mathcal{D}_k$ are incomparable, we use the second part in the
definition of $d_k$ and
\[d_k(W)=\sum_{\scriptstyle W'\in\mathcal{D}_k\cap \cB_{\ge n}, W'\subseteq W}\frac{\diam(W')^s}{\diam(W)^s}= 1.\]
\end{proofof}

\begin{claim}\label{cl3}For every $k\in\N$, $U\in \cB$, with $\diam(U)>0$, $d_k(U)\le \frac{c^{1+s} \cdot
2^{-k}}{\diam(U)^s}$.
\end{claim}
\begin{proofof}{of Claim \ref{cl3}}

We prove by induction on $n-m$ that for every $n,m\in\N$ with $m<n$,
$U \subseteq V$ with $\diam(U)>0$, $U\in\cBn$ and $V\in\cB_m$,
\[d_k(U)\le \frac{d_k(V)
\,\diam(V)^s}{\diam(U)^s}.\]

By the definition of supergale, if $U\in \cBn$, $d_k(U)\le
\frac{d_k(U') \,\diam(U')^s}{\diam(U)^s}$ for $U'\in \cB_{n-1}$ with
$U\subseteq U'$. By induction $d_k(U')\le \frac{d_k(V)
\,\diam(V)^s}{\diam(U')^s}$ and therefore $d_k(U)\le \frac{d_k(V)
\,\diam(V)^s}{\diam(U)^s}$.

 For every $W\in \cB_0$ with $\diam(W)>0$, we use the
second part in the definition of $d_k$ and so $d_k(W)\le
\frac{c^{1+s} \cdot 2^{-k}}{\diam(W)^s}$.

Since for every $U\in \cB$ there is a  $W\in \cB_0$ with $U\subseteq
W$ we have that \[d_k(U)\le \frac{d_k(W)
\,\diam(W)^s}{\diam(U)^s}\le \frac{c^{1+s} \cdot
2^{-k}}{\diam(U)^s}.\]

\end{proofof}

We define next an $s$-supergale $d(U)= \sum_k  2^{k}d_{2k}(U)$.

By Claim \ref{cl3} $d$ is well-defined.

By Claim \ref{cl2}, if $W\in \mathcal{D}_k$, $d(W)\ge 2^k$. Since
for every $k$, $\mathcal{D}_{k}\subseteq \cB$ is a cover of $A$, by
Property \ref{players}\ we have that $A\subseteq \regSS[d]$ and
$s\in \mathcal{G}(A)$.

For the other direction, let $s\in \hat{\mathcal{G}}(A)$. Then there
exists an $s$-supergale $d$ such that $A\subseteq \regSS[d]$.

For each $k\in\N$ let \[\mathcal{C}_k= \myset{U}{\diam(U)>0,
d(U)>2^k \cdot \sum_{W\in\cB_0}d(W)\,\diam(W)^s},\] let
$\mathcal{D}_k=\myset{U}{U\in\mathcal{C}_k\mbox{ and no proper
superset of }U\mbox{ is in }\mathcal{C}_k}$. Then, using Property
\ref{cl4}, $\sum_{U\in\mathcal{D}_k}\diam(U)^s\le 2^{-k}$.

Notice that for every $k$, $\mathcal{D}_k$ is a $2^{-k/s}$-cover of
$\regSS[d]$, so $\dimh(A)\le s$. This completes our proof.

\end{proof}

Notice that a characterization of Hausdorff dimension in terms of
martingales (that is, 1-gales) that holds for Cantor space is not
clear for the general case. $d(U)$ is an $s$-gale if and only if
$d'(U)=\diam(U)^{s-1}d(U)$ is a 1-gale, but it is not clear how to
express $\regSS[d]$ in terms of $d'$.

For  the examples in subsection \ref{secnc}, we obtain
generalizations of the concept of gale used in previous
effectivizations.

\begin{example}

For the set $\Sigma^{\infty}$ of all infinite sequences over a
finite alphabet $\Sigma$ with the metric based on a positive and
nonatomic Borel probability measure $\nu: \Sigma^*\to [0,1]$.
(Generalization of \cite{DPSSF}).

For $s\in [0, \infty)$, an { $s$-supergale $d$} is a function $d:
\Sigma^*\to [0, \infty)$ such that for every $n\in\N$, for every
$w\in\Sigma^*$ with $|w|=n$ the following inequality holds
\begin{equation*}
d(w)\,\nu(w)^s\ge \sum_{a\in\Sigma} d(wa)\,\nu(wa)^s
\end{equation*}

\end{example}

\begin{example}

For $m\in\N$, consider the set $\R^m $ with the metric based on a strongly positive Borel  measure $\nu$. (Generalization
of \cite{DPSSF}).

For $s\in [0, \infty)$, an { $s$-supergale $d$} is a function $d: \Z^m\times \N^{m}\times \N\to [0, \infty)$
 such that for every $\mathbf{z}\in \Z^m, \mathbf{a}\in \N^{m},
n\in\N$ (with $a_i< 2^n$),  the following inequality holds
\begin{align}
d(\mathbf{z}, \mathbf{a}, n)\,\nu(I(\mathbf{z}, \mathbf{a}, n))^s\ge \label{eucG}\\
\sum_{I(\mathbf{t}, \mathbf{b}, n+1)\subseteq I(\mathbf{z}, \mathbf{a}, n)} d(\mathbf{t}, \mathbf{b},
n+1)\,\nu(I(\mathbf{t}, \mathbf{b}, n+1))^s.\nonumber
\end{align}

\end{example}

\begin{example}

Let $h:\N\to \N-\{0,1\}$,  $h^{\omega}=\Pi_{n\in\N}\{0, 1, \ldots,
h(n)-1\}$  with the metric based on a positive and nonatomic Borel
probability measure $\nu$. (Generalization of \cite{GreMil11}).

For $s\in [0, \infty)$, an { $s$-supergale $d$} is a function $d:
h^*\to [0, \infty)$ such that for every $n\in\N$, for every $w\in
h^n$ the following inequality holds
\begin{equation*}
d(w)\,\nu(w)^s\ge \sum_{a\in\Sigma} d(wa)\,\nu(wa)^s.\end{equation*}

\end{example}


\begin{example}
 $P_n(0,1)$ is the set of polynomials  with real coefficients and degree less than or equal to $n$, together with the
metric $d(f, g) = \|f-g\|_{\infty}$.

For $s\in [0, \infty)$, an { $s$-supergale $d$} is a function $d: \Z^{n+1}\times \N^{n+1}\times \N\to [0, \infty)$
 such that for every $\mathbf{z}\in \Z^{n+1}, \mathbf{a}\in \N^{n+1},
k\in\N$ (with $a_i< 2^k$),  the following inequality holds
\begin{equation}\label{eucP}
d(\mathbf{z}, \mathbf{a}, k)\ge  2^{-s} \sum_{DP(\mathbf{t}, \mathbf{b}, k+1)\subseteq DP(\mathbf{z}, \mathbf{a}, k)}
d(\mathbf{t}, \mathbf{b}, k+1).
\end{equation}

\end{example}

\section{Constructive dimension}

In this section we
effectivize Hausdorff dimension by considering constructive
dimension. We consider spaces that have computable nice covers
(defined below) and define constructive dimension in them.

Then we characterize constructive dimension in terms of Kolmogorov
complexity using the concept of Kolmogorov complexity of $x\in X$ at
precision $r\in \N$ inspired by \cite{DPSSF}. This characterization,
together with the absolute stability proven below allows a full
Theory of Information view of Hausdorff dimension in some general
metric spaces.

\begin{definition}
Let $X$ be a metric space with a nice cover
$(\mathcal{B}_n)_{n\in\N}$. Let  $\Sigma$ be finite and $\delta:
\Sigma^* \to \cB$ be surjective. We say that {\sl $(X, (\cB_n),
\delta)$ has a computable nice cover\/} if the following hold,
\begin{description}\setcounter{enumi}{3}
\item[(B5)] (Computable diameter) $\diam\circ \delta$ is a computable function.
\item[(B6)] (Computable cover) The function
\[P: \Sigma^* \times \N\to \Sigma^*\] defined  by $P(w, n)=<w_1,
\ldots, w_k>$ for $\delta(w)\in\cB_n$, such that \newline $\myset{V}{V\in\cB_{n+1}, V\subseteq \delta(w)}=\{\delta(w_1),
\ldots, \delta(w_k)\}$ is a computable function.
\end{description}
\end{definition}

For the rest of this section we fix a space $(X, (\mathcal{B}_n),
\delta)$ with a computable nice cover and omit $(\mathcal{B}_n),
\delta$ when referring to $X$.

\begin{definition} Let $d$ be a supergale. Then $d$ is contructive
if $d\circ\delta$ is lower semicomputable and $\sum_{U\in
\mathcal{B}_{0}} d(U)\,\diam(U)^s$ is a computable
number.\end{definition}

\begin{definition} Let $A\subseteq X$,
\[\hat{\mathcal{G}}_{\constr}(A)=\myset{s}{\mbox{there is a
constructive }s\mbox{-supergale }d \mbox{ with
}A\subseteq\regSS[d]}.\]
\end{definition}

\begin{definition} Let $A\subseteq X$. We define the constructive
dimension of $A$ as
$\cdim(A)= 
\inf\hat{\mathcal{G}}_{\constr}(A)$.\end{definition}

For each $x\in X$ we denote with $\cdim(x)$ the constructive
dimension of the singleton set, that is, $\cdim(x)=\cdim(\{x\})$.

Let us briefly comment on the importance of the choice of nice cover
when effectivizing a dimension. The classical  definition of
Hausdorff dimension is invariant under the choice of nice cover as
our characterization in Theorem \ref{thsg}\ proves. In the case of
effective dimension, however, this invariance does not generally
hold (the choice of $(\mathcal{B}_n), \delta$ can be relevant) and
this fact is very meaningful. For instance in the case of
Finite-State dimension in $\R^n $, that is, restriction to gales
that can be computed by Finite State Automata (done for Cantor space
in \cite{FSD}), it matters whether we use dyadic intervals, triadic
intervals, etc and this is related to the existence of normal
sequences that are not absolutely normal \cite{FSD}. On the other
hand certain invariance properties are known for constructive and
polynomial-time dimension \cite{HitMay13}.

We next look at the compression characterization of constructive
dimension. Notice that the definition of Kolmogorov complexity at a
certain precision is more involved than in the Cantor case (or even
slightly more than in the Euclidean case) when there is no Borel
measure.

Constructive dimension can be characterized in terms of Kolmogorov
complexity as follows. Let $\K(w)$ denote the usual self-delimiting
Kolmogorov complexity of $w\in \Sigma^*$.

\begin{definition} Let $x\in X$, let $r\in\N$. The {\sl Kolmogorov
complexity of $x$ at precision $r$\/} is
\[\K_r(x)=\inf\myset{\K(w)}{x\in\delta(w), \, \diam(\delta(w))\le
2^{-r}},\] with $\K_r(x)=\infty$ if not such $w$ exists.
\end{definition}

\begin{theorem} Let $X$ be a metric space with a computable nice cover. Let
$Z\subseteq X$,\[\cdim(Z)=\sup_{x\in Z}\liminf_r\frac{\K_r(x)}{r}.\]

\end{theorem}

\begin{proof}

The first direction uses semicomputability and  additivity
properties of prefix Kolmogorov complexity.

Let $s, s', s''$ be rational numbers such that $s>s'>s''>\sup_{x\in
Z}\liminf_r\frac{\K_r(x)}{r}$.
Let
\[A=\myset{w}{\K(w)\le -s'\log(\diam(\delta(w)))}.\] Then $A$ is computably enumerable.

We define $d$ as follows, let $U\in\cBn$ with $\diam(U)>0$,



\[d(U)= \sum_{V\subseteq U, V\in
\delta(A)\cap \cB_{\ge n}}\frac{\diam(V)^{s'}}{\diam(U)^{s}}.\]

$d$ is well defined since $\sum_{V\in\delta(A)}\diam(V)^{s'}\le
\sum_w 2^{-\K(w)}<\infty$. $d$ is constructible by property (B6).

$d$ is an $s$-supergale since for $W\in\cB_{n-1}$,
\[\sum_{\scriptstyle{U\subseteq W,
U\in\cB_{n}}} d(U)\, \diam(U)^s = \sum_{\scriptstyle{U\subseteq W,
U\in\cB_{n}}} \sum_{V\subseteq U, V\in \delta(A)\cap \cB_{\ge
n}}\diam(V)^{s'}\le \] \[\le \sum_{V\subseteq W, V\in \delta(A)\cap
\cB_{\ge n-1}}\diam(V)^{s'}= d(W) \diam(W)^s,\] where the last
property follows from property (A2).

If $U\in\delta(A)$ then $d(U)\ge \diam(U)^{s'-s}$. Let $x\in Z$.
Since $\K_r(x)< r s''$ for infinitely many $r$, for those $r$ there
is a $w_r$ with $\K(w_r)\le r s''$, $x\in\delta(w_r)$ and
$\diam(\delta(w_r))\le 2^{-r}$. Therefore $w_r\in A$ and
$d(\delta(w_r))\ge \diam(\delta(w_r))^{s'-s}\ge 2^{r(s-s')}$.

Therefore for each $x\in Z$ there is $(\delta(w_r))\subseteq\cB$
with $\limsup_r d(\delta(w_r))=\infty$ and by Property
\ref{players}\ $x\in\regSS[d]$. Therefore $Z\subseteq \regSS[d]$.

For the other direction, let $s>\cdim(Z)$. Let $d$ be a constructive
$s$-supergale such that $Z\subseteq \regSS[d]$. For each $k\in\N$,
let
\[A_k=\myset{w}{d(\delta(w))\ge 2^k
(\sum_{W\in\cB_0}d(W)\,\diam(W)^s)}.\] Then by Property \ref{cl4},
for each $r\in \N$ and for every $\mathcal{E}\subseteq\cB$ such that
all sets in $\mathcal{E}$ are incomparable, the number of $w\in
A_k\cap \delta^{-1}(\mathcal{E})$ such that $\diam(\delta(w))>
2^{-r-1}$ is at most $2^{-k+rs+s}$. Also by Property \ref{cl4}, if
$w\in A_k$ then $\diam(\delta(w))\le 2^{-k/s}$.

Fix $k\in\N$. We enumerate all strings $u$ in $A_k$ and include
$\delta(u)$ in $\mathcal{E}^k$  for the strings such that
$\delta(u)$ is incomparable with all sets previously included in
$\mathcal{E}^k$.

For $w\in A_k\cap \delta^{-1}(\mathcal{E}^k)$ there is an $r\ge
\lceil k/s-1\rceil$ with $2^{-r-1}<\diam(\delta(w))\le 2^{-r}$, and
$\K(w)\le rs+s-k+O(\log k)+O(\log r)$.

Since $x\in\regSS[d]$ for every $k$ there is $w\in A_k\cap
\delta^{-1}(\mathcal{E}^k)$ such that $x\in\delta(w)$ and therefore
\[\liminf_r\frac{\K_r(x)}{r}\le \frac{rs+s-k+O(\log k)+O(\log
r)}{r}\le s.\] Taking the supremum over all $x\in Z$ we have proven
our Theorem.
\end{proof}

\begin{corollary} Let $X$ be a metric space with a computable nice cover. Let $x\in
X$,\[\cdim(x)=\liminf_r\frac{\K_r(x)}{r}.\]

\end{corollary}

As a corollary we have the property of total stability of
constructive dimension (see \cite{DISS}\ for the corresponding
version in Cantor space).

\begin{corollary}  Let $X$ be a metric space with a computable nice
cover. Let $A\subseteq X$. Then \[\cdim(A)=\sup_{x\in A}\cdim(x).\]

\end{corollary}


We finish by presenting the constructive dimension of the spaces in
all previous examples. Notice that all the examples in subsection
\ref{secnc}\ have computable nice covers.

\begin{example}

For the set $\Sigma^{\infty}$ of all infinite sequences over a
finite alphabet $\Sigma$ with the metric based on a positive and
nonatomic Borel probability measure $\nu: \Sigma^*\to [0,1]$.
(Generalization of \cite{DPSSF}).

Constructive supergales are constructive functions $d: \Sigma^*\to
[0, \infty)$ that fulfill the supergale inequality.

For $x\in \Sigma^{\infty}$, the { Kolmogorov complexity of $x$ at
precision $r$} is
\[\K_r(x)=\inf\myset{\K(w)}{w\sqsubseteq x, \, \nu(w)\le
2^{-r}}.\]

\end{example}

\begin{example}

For  the set $\R^m $ ($m\in\N$) with the metric based on a strongly positive Borel  measure $\nu$. (Generalization of
\cite{DPSSF}).

Constructive supergales are constructive functions
 $d: \Z^m\times \N^{m}\times \N\to [0, \infty)$ that fulfill inequality (\ref{eucG}).

For $x\in\R^m$, the { Kolmogorov complexity of $x$ at precision $r$}
is \begin{eqnarray*}\K_r(x)=\inf\{\,\K(w)\,&|&\,w= <\mathbf{z}, \mathbf{a}, n>, \mathbf{z}\in\Z^m, \mathbf{a}\in\N^{m}, n\in\N, a_i< 2^n, x\in I(\mathbf{z}, \mathbf{a}, n),\\
&&\nu(I(\mathbf{z}, \mathbf{a}, n))\le 2^{-r}\}.\end{eqnarray*}

\end{example}

\begin{example}

Let $h:\N\to \N-\{0,1\}$,  $h^{\omega}=\Pi_{n\in\N}\{0, 1, \ldots,
h(n)-1\}$  with the metric based on a positive and nonatomic Borel
probability measure $\nu$. (Generalization of \cite{GreMil11}).

Constructive supergales are constructive functions  $d: h^*\to [0,
\infty)$ that fulfill the supergale inequality.

For $x\in h^{\omega}$, the { Kolmogorov complexity of $x$ at
precision $r$} is
\[\K_r(x)=\inf\myset{\K(w)}{w\mbox{ codifies }y\in h^*, y\sqsubseteq x, \, \nu(y)\le
2^{-r}}.\]

\end{example}

\begin{example}
$P_n(0,1)$ is the set of polynomials  with real coefficients and degree less than or equal to $n$, together with the metric
$d(f, g) = \|f-g\|_{\infty}$.

Constructive supergales are constructive functions
 $d: \Z^{n+1}\times \N^{n+1}\times \N\to [0, \infty)$ that fulfill inequality (\ref{eucP}).

For $f\in P_n(0,1)$, the { Kolmogorov complexity of $f$ at precision $r$} is
\begin{eqnarray*}\K_r(f)=\inf\{\,\K(w)\,&|&\,w= <\mathbf{z}, \mathbf{a}, k>, \mathbf{z}\in\Z^{n+1}, \mathbf{a}\in\N^{n+1},
k\in\N, a_i< 2^k, \\ &&f\in DP(\mathbf{z}, \mathbf{a}, k), k\ge r+\log(n+1)\}.\end{eqnarray*}
\end{example}

\section{Further directions}

This paper intends to give an initial view of effective dimension on
arbitrary metric spaces. A number of issues have not been addressed
here including the definition of resource-bounded dimension for
resource-bounds other than lower semicomputability and the role of
different (computable) nice covers in effectivization and conditions
for their equivalence within it.

\section*{Acknowledgment}

We thank Donald Stull for suggesting example \ref{polyn}.

\bibliographystyle{plain}
\bibliography{../biblios/todo}

\end{document}